\documentclass{article}

\usepackage{arxiv}

\usepackage[utf8]{inputenc}
\usepackage[T1]{fontenc}
\usepackage{hyperref}
\usepackage{url}
\usepackage{booktabs}
\usepackage{amsfonts}
\usepackage{amsmath}
\usepackage{amssymb}
\usepackage{amsthm}
\usepackage{nicefrac}
\usepackage{microtype}
\usepackage{graphicx}
\usepackage{doi}
\usepackage{listings}
\usepackage{xcolor}

\newtheorem{definition}{Definition}

\newtheorem{proposition}{Proposition}
\newtheorem{remark}{Remark}

\definecolor{codebg}{RGB}{248,248,248}
\definecolor{codekw}{RGB}{0,0,180}
\definecolor{codestr}{RGB}{0,128,0}
\definecolor{codecom}{RGB}{120,120,120}

\lstdefinestyle{pystyle}{
  backgroundcolor=\color{codebg},
  basicstyle=\ttfamily\footnotesize,
  keywordstyle=\color{codekw}\bfseries,
  stringstyle=\color{codestr},
  commentstyle=\color{codecom}\itshape,
  breaklines=true,
  captionpos=b,
  showstringspaces=false,
  language=Python,
  frame=single,
  framerule=0.3pt,
  rulecolor=\color{black!20},
  xleftmargin=0.5em,
  xrightmargin=0.5em,
  aboveskip=0.6em,
  belowskip=0.6em,
}
\title{State Twins: An Off-Chain Substrate for Agentic Reasoning\\
over Decentralized Finance Protocols}

\author{
  \href{https://orcid.org/0009-0001-6031-4066}{\includegraphics[scale=0.06]{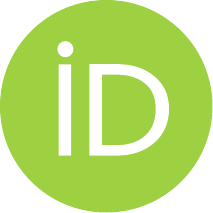}\hspace{1mm}Ian C. Moore, PhD}\\
  Principal, DeFiMind \\
  \texttt{imoore@defimind.ai}
}

\date{May 12, 2026}

\renewcommand{\shorttitle}{State Twins for Agentic DeFi}

\hypersetup{
pdftitle={State Twins: An Off-Chain Substrate for Agentic Reasoning over Decentralized Finance Protocols},
pdfsubject={cs.DC},
pdfauthor={Ian C. Moore},
pdfkeywords={decentralized finance, automated market maker, digital twin, agentic systems,
             large language models, model context protocol, state-space models,
             constant function market makers, Uniswap, Balancer, simulation},
}

\begin{document}
\maketitle

\begin{abstract}
We introduce the \emph{State Twin}: a typed, in-memory, replayable replica of an on-chain automated market maker (AMM) pool that serves as a substrate for agentic reasoning over decentralized finance (DeFi) protocols. Agentic DeFi stacks today couple reasoning to chain time, since every "what if?" query incurs a new RPC read or a real transaction, so the agent's effective action space is bounded by block confirmation latency and gas. We argue this coupling is a structural problem rather than a performance one, and that the missing layer is an off-chain substrate that preserves the protocol's exact mathematics while admitting the operations on-chain state cannot: forking, replay, branching, counterfactual rollout. We formalize each AMM family (Uniswap V2, V3, Balancer, Stableswap) as a discrete-time controlled dynamical system, prove a quantitative fidelity bound on the divergence between twin and chain, and give the open architecture used in DeFiPy v2, an open-source Python toolkit that ships the State Twin substrate and a reference Model Context Protocol server exposing typed analytical primitives as LLM tools. The same primitive (i.e., one Python class, one calling pattern) serves a notebook quant, a backtest, and an LLM agent without modification. We close with a fork-and-evaluate worked example: a single live RPC read seeds N independent in-memory twins under distinct price-shock scenarios, in sub-second wall-clock time. The contribution is the substrate, not a particular agent, which is what the specification of \emph{what an agentic DeFi substrate must look like}.
\end{abstract}

\keywords{decentralized finance \and automated market makers
\and digital twin \and agentic systems \and large language models
\and model context protocol \and state-space models \and Python SDK}

% ============================================================
\section{Introduction}
\label{sec:intro}
% ============================================================

The dominant pattern for agentic DeFi systems today wires a reasoning model
(typically a large language model) directly to chain reads and chain writes,
mediated by a thin RPC wrapper or a vendor data API. We will call this
architecture \emph{reactive}: that is, every step the agent takes either returns a
fresh on-chain query result or commits a transaction. There is no layer in
between. The agent's effective action space is therefore bounded by chain
time (i.e., block confirmation latency, gas, sequential transaction
ordering), and its reasoning is bounded by what one round-trip to the chain
can answer. Counterfactual queries, multi-scenario evaluation, and what an
analyst would casually call \emph{thinking before acting} are not first-class
operations in the reactive architecture; they are workflows the architecture
silently disallows.

This paper argues that reactive is the wrong default, and that agentic DeFi
specifically has a clean structural fix. AMM pools are deterministic state
machines: their on-chain state at block $k$, together with a transaction
$u_k$, exactly determines their state at block $k{+}1$ via an invariant
that is fully known and compactly expressible in software. There is no
hidden state, no privileged side channel, no non-deterministic adversary.
The implication is that an off-chain replica can be constructed that is
\emph{mathematically faithful} to chain reality at any pinned block: exact
in the rational arithmetic of the protocol invariants, and exact up to a
known additive rounding bound under the deployed fixed-point arithmetic
\cite{TG25}. Primitives consuming such a replica are equivalent
in their outputs whether the replica was sourced from a live RPC read, a
synthetic recipe, a fork node, or a historical archive.

We call this replica the \emph{State Twin}. The contribution of this paper
is the formalization of the State Twin as the missing substrate for
agentic DeFi, which admits a clean provider/builder factorization that survives across protocol families. This enables a class of agent workflows (i.e., fork, simulate, score,
decide) that the reactive architecture cannot support at all.

\paragraph{Contributions.} This paper makes the following contributions:

\begin{enumerate}
\item We formalize each major AMM family as a discrete-time controlled
      dynamical system and show that this state-space framing is the natural
      mathematical structure for agentic DeFi reasoning
      (Section~\ref{sec:state-space}).

\item We define the \emph{State Twin} as the canonical software realization
      of the state-space view, give a precise abstraction
      (provider $\to$ snapshot $\to$ builder $\to$ twin $\to$ primitive)
      that decouples state source from state shape and realizes the substrate-access path required for counterfactual reasoning, and prove a quantitative
      \emph{State Twin Fidelity} bound on the divergence between twin and
      on-chain trajectories under the discretized constant-product update
      (Section~\ref{sec:twin}).

\item We expose the State Twin pattern through DeFiPy v2, an open-source
      Python toolkit that ships 7 core operational primitives (Swap, Join, AddLiquidity, etc.) 
      and 21 typed analytical primitives built on them, a State Twin substrate (\texttt{MockProvider}, 
      \texttt{LiveProvider}), and a reference Model Context Protocol server exposing 10 of those
      primitives as LLM tools (Section~\ref{sec:defipy}).

\item We present a \emph{fork-and-evaluate} pattern: a single chain read
      seeds $N$ independent twins, each evaluated under a distinct
      scenario, producing a recommendation distribution in sub-second
      wall clock time (Section~\ref{sec:fork-evaluate}).

\item We discuss the substrate-vs-product distinction underlying the
      design: why the agentic surface is deliberately read-only, why the
      math layer remains LLM-free, and why this is the architecture most
      likely to compound across protocols (Section~\ref{sec:discussion}).
\end{enumerate}

The State Twin, as a named pattern with an explicit
provider/builder split, is designed specifically for LLM-native agentic
reasoning over AMM state. The reactive architecture is implicit in essentially every
production agentic-DeFi system we surveyed during this work.

% ============================================================
\section{Preliminaries: AMMs as Controlled Dynamical Systems}
\label{sec:state-space}
% ============================================================

We adopt a state-space view of AMM protocols. The motivation is twofold:
(1) the dynamical-systems formulation is the natural language for
counterfactual queries (e.g., ``what is the state if I apply this trade?''),
which is what an agent that thinks before it acts most needs to ask;
(2) the formulation makes explicit that an off-chain replica is
mathematically well-posed, since the state-transition map is a pure function
of state and input, with no hidden randomness on the protocol side.

\subsection{General Form}

Let $\mathcal{X}$ denote the protocol-specific state space, $\mathcal{U}$
the input space (swaps, mints, burns, fee accrual), and $\mathcal{W}$
the exogenous-disturbance space (e.g., oracle prices, external trades observed
through the same pool). An AMM pool evolves as the discrete-time system:
\begin{equation}
\label{eq:state-space}
\boxed{\quad x_{k+1} \;=\; f(x_k,\, u_k,\, w_k), \qquad y_k \;=\; h(x_k) \quad,}
\end{equation}
where $f: \mathcal{X} \times \mathcal{U} \times \mathcal{W} \to \mathcal{X}$
is the transition map (the protocol invariant in solver form) and
$h: \mathcal{X} \to \mathcal{Y}$ is the observation map (e.g.\ spot price,
TVL, position value). Expression \eqref{eq:state-space} is the discrete-time controlled state-space form introduced by Kalman \cite{Kalman60} and standard in control and estimation theory; we adopt it here as the deterministic specialization for AMM protocols, with $f$ fixed to the deployed protocol invariant and $w_k$ treated as known exogenous disturbances, as substrate for off-chain agentic reasoning.

The underlying state-update structure of \eqref{eq:state-space} has appeared
in several adjacent formalisms in prior work, each suited to a specific
purpose: stochastic state-space for adaptive mechanism design \cite{NKV24},
where $u_k$ and $w_k$ are promoted to random variables and Kalman filtering
recovers a hidden external price; constant-function characterization for
theoretical analysis \cite{AC20}, where $f$ preserves a trading function
$\varphi$ over the level set of valid reserves; priced timed automata for
verification \cite{TG25}, where the state in $\mathcal{X}$ includes clock
valuations and transitions are edge-labeled; and dependently-typed
functional transitions for theorem-proving correctness \cite{PB24}. The
treatments differ in vocabulary and proof toolkit, convex-optimization,
Bayesian inference, model-checking, and theorem-proving, but the underlying
state-update structure is the same. Our deterministic specialization of
\eqref{eq:state-space} occupies the substrate-construction slot in this
landscape: $f$ is known, $w_k$ is treated as a known exogenous input rather
than a noise term, and the resulting object is used not for inference,
verification, or theoretical characterization but as a typed, replayable,
forkable off-chain replica of the on-chain pool.

The protocol invariant $\mathcal{I}: \mathcal{X} \to \mathbb{R}_{\geq 0}$
admits a precise per-transition behaviour that depends on whether the
underlying arithmetic is real-valued or discretized. The protocol-specific instantiations of $f$, $h$, and $\mathcal{I}$ for Uniswap V2, V3, Balancer, and Stableswap that we present in the subsections below follow the hand-derived AMM math treatment of \cite{Moo25-Book}; we summarize each compactly here and refer the reader there for complete derivations. Under real-valued arithmetic, fee-free swaps preserve $\mathcal{I}$ exactly and fee-accruing swaps weakly increase it. Under the deployed fixed-point arithmetic, with floor rounding on the output leg, the invariant is preserved up to a controlled, one-sided additive perturbation:
\begin{equation}
\label{eq:invariant-preservation}
\mathcal{I}(x_{k}) - \rho_k \;\leq\; \mathcal{I}(x_{k+1}) \;\leq\; \mathcal{I}(x_k) + \phi_k,
\end{equation}
where $\phi_k \geq 0$ is the fee accrual at step $k$ and $\rho_k \geq 0$ is
the rounding slack, bounded by the relevant reserve at step $k$
\cite{TG25}. We return to a quantitative form of \eqref{eq:invariant-preservation}
in the V2 subsection below.

This formulation cleanly separates three things: \emph{state} (the
quantities the chain stores), \emph{input} (transactions an agent or
external party submits), and \emph{observation} (the analytics the agent
or analyst extracts). Reactive architectures conflate all three by
forcing every observation $y_k$ to be a fresh chain read; the State Twin
materializes $x_k$ off-chain so that observations and counterfactual
transitions $f(x_k, u', w')$ become local computation.

\subsection{Uniswap V2: Constant Product}

For a Uniswap V2 pool with reserves $(r_0, r_1)$ and fee $\phi$ (e.g.\
$\phi = 0.003$), the state is $x_k = (r_0^{(k)}, r_1^{(k)})$ and the
invariant is
\begin{equation}
\label{eq:v2-invariant}
\mathcal{I}_{\mathrm{V2}}(x) \;=\; r_0 \cdot r_1 \;=\; k.
\end{equation}
A swap $u_k = (\Delta_0^{\mathrm{in}}, 0)$ trading token $0$ for token $1$
induces, in real-valued arithmetic, the transition
\begin{equation}
\label{eq:v2-transition}
r_0^{(k+1)} = r_0^{(k)} + \Delta_0^{\mathrm{in}}, \qquad
r_1^{(k+1)} = r_1^{(k)} - \frac{(1-\phi)\,\Delta_0^{\mathrm{in}}\,r_1^{(k)}}{r_0^{(k)} + (1-\phi)\,\Delta_0^{\mathrm{in}}}.
\end{equation}
The deployed protocol implements this in fixed-point integer arithmetic
with floor rounding on the output leg. Tranquilli and Gupta
\cite{TG25} formalize this discretization for the fee-free case as
\begin{equation}
\label{eq:v2-discretized}
r_0^{(k+1)} = r_0^{(k)} + \Delta_0^{\mathrm{in}}, \qquad
r_1^{(k+1)} = \left\lfloor \frac{r_0^{(k)} \cdot r_1^{(k)}}{r_0^{(k+1)}} \right\rfloor,
\end{equation}
and prove the following bound on the resulting product
$K' := r_0^{(k+1)} \cdot r_1^{(k+1)}$ relative to $K := r_0^{(k)} \cdot r_1^{(k)}$:
\begin{equation}
\label{eq:tg-bound}
K - r_0^{(k+1)} \;\leq\; K' \;\leq\; K \qquad \Rightarrow \qquad
|K' - K| \leq r_0^{(k+1)}.
\end{equation}
For a sequence of $n$ such swaps with reserve bound
$r_0^{(j)} \leq B$ for all $j$, telescoping
\eqref{eq:tg-bound} yields the multi-step bound
\begin{equation}
\label{eq:tg-multistep}
|K_n - K_0| \;\leq\; nB.
\end{equation}
This is the rigorous form of \eqref{eq:invariant-preservation} for V2:
the slack $\rho_k$ at step $k$ is bounded by $r_0^{(k+1)}$, and the
cumulative slack over $n$ steps is bounded by $nB$. The observation map
$h$ exposes the spot price $p = r_1/r_0$, the TVL in either numeraire,
and the LP value derived from a position's fractional ownership of the
reserves; on real-valued state, these are closed-form functions of $x_k$
alone, with no chain calls and no oracle dependencies.

\subsection{Uniswap V3: Concentrated Liquidity over Ticks}

V3 partitions price space into discrete ticks; liquidity $L_i$ is allocated
to the range $[T_i, T_{i+1})$. Inside a single tick range with active
liquidity $L_{\mathrm{act}}$ and price $P$, the protocol maintains
\emph{virtual reserves} $(r_0^{\mathrm{virt}}, r_1^{\mathrm{virt}})$
defined so that $L_{\mathrm{act}} = \sqrt{r_0^{\mathrm{virt}} \cdot r_1^{\mathrm{virt}}}$;
swaps inside the active tick evolve $(r_0^{\mathrm{virt}}, r_1^{\mathrm{virt}})$
exactly as in V2, with cross-tick transitions accounted for by re-pricing
virtual reserves at each tick boundary \cite{Adams21,TG25}. The state
$x_k$ includes the current $\sqrt{P}$, active tick, total liquidity, and
the tick-bitmap. For \emph{active-liquidity} primitives (those that operate
while the pool is inside a single tick range), the V3 transition is
structurally identical to V2 with the discretized rule \eqref{eq:v2-discretized}
applied to the virtual reserves; the rounding bound \eqref{eq:tg-bound}
applies verbatim \cite{TG25}. This is exactly why a V3 pool, once
linearized at the active tick, exposes the same primitive surface as V2
against the State Twin (see Section~\ref{sec:fork-evaluate}).

\subsection{Balancer: Weighted Geometric Mean}

Balancer 2-asset pools generalize V2 with token weights $(w_0, w_1)$,
$w_0 + w_1 = 1$. The invariant is
\begin{equation}
\label{eq:balancer-invariant}
\mathcal{I}_{\mathrm{Bal}}(x) \;=\; r_0^{w_0} \cdot r_1^{w_1}.
\end{equation}
A 50/50 Balancer pool reduces (up to scaling) to V2; an 80/20 pool exposes
asymmetric impermanent loss, which the State Twin makes available as a
local primitive call rather than as a chain query.

\subsection{Stableswap: Amplified Invariant}

Curve-style stableswap pools blend the constant-sum and constant-product
invariants via an amplification coefficient $A$. For two assets,
\begin{equation}
\label{eq:stableswap-invariant}
A \cdot n^n \cdot \!\sum_i\! r_i \;+\; D \;=\; A \cdot n^n \cdot D \;+\; \frac{D^{n+1}}{n^n \prod_i r_i},
\end{equation}
with $D$ implicitly defined and recovered via Newton iteration on $x_k$.
Stableswap admits a closed-form $\varepsilon \leftrightarrow \delta$
relationship between depeg magnitude and LP value loss; depeg-risk
primitives in DeFiPy invert this relationship over the State Twin to
produce risk metrics without any chain dependency at evaluation time.

\subsection{The State-Space Implication}

The unifying observation across all four families is that
\eqref{eq:state-space} is exactly evaluable off-chain, given $x_k$ and a
candidate $(u_k, w_k)$, because $f$ is a pure function. This is the
mathematical justification for the State Twin: an off-chain replica that
preserves $x_k$ is sufficient for any analytical workflow, including
arbitrary counterfactual rollouts, that does not require canonical
on-chain settlement. The reactive architecture's implicit assumption,
that every observation must come from the chain, is a software choice,
not a mathematical necessity.

\begin{remark}[Forced Factorization and the Substrate Necessity]
\label{rem:factorization}
Let $\chi: \mathcal{P} \to \mathcal{X}$ denote the chain-read map carrying
a pool identifier $p \in \mathcal{P}$ to its current on-chain state. In
a reactive architecture, every state-dependent query is forced to factor
through $\chi$: state access reduces to $\chi$ itself; observation
reduces to $h \circ \chi$; transition evaluation reduces to chain
settlement followed by $\chi$. This factorization is not required by the
underlying mathematics of \eqref{eq:state-space}, in which $h$ and $f$
are pure functions on $\mathcal{X}$ and
$\mathcal{X} \times \mathcal{U} \times \mathcal{W}$ respectively. The
forced factorization renders three classes of queries operationally
inaccessible: \emph{counterfactual transitions} $f(\hat{x}, u', w')$ for
inputs $(u', w')$ not committed to the chain; \emph{ensemble queries}
$\{h(\hat{x}^{(s)})\}_{s \in S}$ over fork families
$\{\hat{x}^{(s)}\}_{s \in S}$; and \emph{trajectory queries}
$\{h(\hat{x}_k)\}_{k=0}^{n}$ over rollouts of arbitrary depth $n$.

A corollary, sketched: any architecture supporting counterfactual queries
must contain a local realization of $\mathcal{X}$ admitting direct
evaluation of $h$ and $f$ without reference to $\chi$. The argument is
brief: a counterfactual transition $f(\hat{x}, u', w')$ with $u'$
uncommitted corresponds to no value of $\chi(p)$ for any
$p \in \mathcal{P}$, so it cannot factor through $\chi$, so its
evaluation must be mediated by some $\mathcal{X}$-valued object accessible
independently of $\chi$. The State Twin (Section~\ref{sec:twin}) is the
canonical such object; we will develop the full characterization elsewhere.
\end{remark}

% ============================================================
\section{The State Twin}
\label{sec:twin}
% ============================================================

\subsection{Definition}

\begin{definition}[State Twin]
\label{def:state-twin}
A \emph{State Twin} of an AMM pool $\mathcal{P}$ at block $k$ is a typed
in-memory object $\widehat{x}_k$ together with a set of operations
$\Omega = \{f, h, \mathrm{clone}\}$ such that, for some additive
fidelity slack $\varepsilon_k \geq 0$:
\begin{itemize}
\item[(T1)] \emph{State fidelity.} $\widehat{x}_k$ is structurally
            isomorphic to the on-chain state $x_k$ of $\mathcal{P}$ at
            block $k$ under the natural decimal-adjusted mapping, with
            componentwise discrepancy bounded by $\varepsilon_k$.

\item[(T2)] \emph{Observational equivalence.} For every observable
            $y$ computable from state alone via the observation map
            $h$,
            \[
            |\,h(\widehat{x}_k) - h(x_k)\,| \;\leq\; L_h \cdot \varepsilon_k
            \]
            where $L_h$ is the (local) Lipschitz constant of $h$ on the
            relevant state region. On real-valued arithmetic
            $\varepsilon_k = 0$ and equivalence is exact.

\item[(T3)] \emph{Transitional equivalence.} For every transition input
            $(u, w) \in \mathcal{U} \times \mathcal{W}$,
            \[
            |\,f(\widehat{x}_k, u, w) - f(x_k, u, w)\,| \;\leq\; \varepsilon_{k+1},
            \]
            where $\varepsilon_{k+1}$ is the propagated slack after one
            transition. For the discretized constant-product rule, the
            per-step contribution to $\varepsilon_{k+1}$ is bounded by
            the relevant reserve, per \eqref{eq:tg-bound}.

\item[(T4)] \emph{Independent forking.} $\mathrm{clone}(\widehat{x}_k)$
            produces an independent twin $\widehat{x}_k'$ such that
            subsequent operations on $\widehat{x}_k'$ leave
            $\widehat{x}_k$ unchanged.
\end{itemize}
\end{definition}

The four properties capture, respectively: \emph{state fidelity} (T1),
\emph{observational equivalence} (T2), \emph{transitional equivalence}
(T3), and \emph{independent forking} (T4). The Lipschitz constant $L_h$ in T2 invokes the standard quantitative bound of Lipschitz \cite{Lipschitz1876} on how state perturbations propagate through a continuous map, in the form standard to modern control theory \cite{Khalil02}; for AMM observables such as spot price, TVL, and position value, $L_h$ is small on realistic state regions, so observation slack is dominated by the state slack $\varepsilon_k$ itself. T4 is what reactive architectures fundamentally cannot provide: an on-chain pool admits exactly one trajectory per epoch, while a State Twin admits arbitrarily
many. Properties T2 and T3 specialize the classical observation and transition equation faithfulness conditions of state-space control theory \cite{Kalman60, Khalil02} to the State Twin setting; T1 grounds them in initial-state fidelity, and T4 adds the structural property, independent forking, that distinguishes a deterministic-substrate twin from a classical sensor/plant model. 

The slack $\varepsilon_k$ is zero on real-valued state (the case relevant
to most analytical primitives, which operate on rationals or floats with
sufficient precision) and is bounded explicitly when the twin is meant to
mirror the deployed fixed-point arithmetic of the chain. The next
proposition makes the bound concrete for the constant-product family.

\begin{proposition}[State Twin Fidelity]
\label{prop:fidelity}
Let $\widehat{x}_0 = x_0$ be a State Twin built from an exact chain
snapshot at block $k_0$, evolving under the real-valued constant-product
update rule \eqref{eq:v2-transition} with $\phi = 0$. Suppose the on-chain
pool, evolving under the discretized update rule \eqref{eq:v2-discretized},
and the twin are driven by the same sequence of fee-free swap inputs
$(\delta_1, \dots, \delta_n)$ with the on-chain reserve bounded by
$r_0^{(j)} \leq B$ for all $1 \leq j \leq n$. Then the constant-product
observable $h_K(x) := r_0 \cdot r_1$ satisfies
\begin{equation}
\label{eq:fidelity-bound}
|\,h_K(\widehat{x}_n) - h_K(x_n)\,| \;\leq\; nB,
\end{equation}
in raw token units, where the discrepancy is zero in the absence of
rounding (i.e.\ when both systems are evaluated on real-valued state)
and otherwise bounded by the single-step result of Tranquilli and Gupta
\cite{TG25} applied telescopically.
\end{proposition}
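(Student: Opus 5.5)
The plan is to compare both trajectories against the conserved quantity $K_0 := h_K(x_0) = h_K(\widehat{x}_0)$, rather than comparing twin and chain step by step. The reason is that the two reserve paths separate once rounding occurs, so a per-step comparison would need Lipschitz control of the update map. Comparing each trajectory to $K_0$ avoids this, and the triangle inequality then gives
\[
|h_K(\widehat{x}_n) - h_K(x_n)| \le |h_K(\widehat{x}_n) - K_0| + |K_0 - h_K(x_n)|.
\]

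\textbf{Twin side.} The twin evolves under \eqref{eq:v2-transition} with $\phi = 0$. Substituting into the update gives
\[
r_1^{(k+1)} = r_1^{(k)} - \frac{\delta\, r_1^{(k)}}{r_0^{(k)}+\delta} = \frac{r_0^{(k)} r_1^{(k)}}{r_0^{(k)}+\delta},
\]
so $r_0^{(k+1)} r_1^{(k+1)} = r_0^{(k)} r_1^{(k)}$ exactly. By induction on $k$, $h_K(\widehat{x}_n) = K_0$, and the first term vanishes. This also settles the clause of the statement that the discrepancy is zero when both systems use real-valued arithmetic: in that case the second term vanishes by the same computation.

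\textbf{Chain side.} Here I would apply the single-step bound \eqref{eq:tg-bound} at each step $j$. It gives $K_{j-1} - r_0^{(j)} \le K_j \le K_{j-1}$, where $K_j := h_K(x_j)$. Writing the total drift as the telescoping sum $K_0 - K_n = \sum_{j=1}^n (K_{j-1} - K_j)$, each summand lies in $[0, r_0^{(j)}] \subseteq [0, B]$ by the reserve hypothesis. Hence $0 \le K_0 - K_n \le nB$, which is \eqref{eq:tg-multistep}. Combining the two sides yields \eqref{eq:fidelity-bound}.

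\textbf{Main obstacle.} The delicate step is checking that \eqref{eq:tg-bound} actually applies at every step along the chain trajectory. Two hypotheses need confirming. First, the chain state must remain integer-valued, which holds because $r_0^{(j)} = r_0^{(j-1)} + \delta_j$ with integer inputs and $r_1^{(j)}$ is a floor. Second, the bound index must match: \eqref{eq:tg-bound} is stated in terms of $r_0^{(k+1)}$, the post-swap reserve, and the hypothesis $r_0^{(j)} \le B$ for $1 \le j \le n$ covers exactly the post-swap reserves $r_0^{(1)},\dots,r_0^{(n)}$.

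A related care point is that the swap inputs $\delta_j$ must be read identically on both systems, namely as the same token-$0$ deposit. Only the output leg differs between twin and chain, so the twin's $r_0$ path coincides with the chain's. This is not needed for the bound on $h_K$, but it is worth stating because it is what makes the twin a faithful replica in the sense of (T3).

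One could sharpen the result to a one-sided bound, $0 \le h_K(\widehat{x}_n) - h_K(x_n) \le \sum_j r_0^{(j)}$. I would note this as a remark only and prove the stated symmetric form.
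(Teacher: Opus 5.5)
Your proposal is correct and follows the paper's proof essentially step for step. The real-valued twin conserves $K_0$ exactly, the chain's drift $|h_K(x_n)-K_0|$ is bounded by telescoping the single-step bound \eqref{eq:tg-bound} (the paper cites \cite[Theorem~1]{TG25} for this step), and the triangle inequality against $K_0$ finishes; your explicit telescoping, your index check on $r_0^{(j)}$, and the one-sided refinement $0 \le h_K(\widehat{x}_n)-h_K(x_n) \le \sum_j r_0^{(j)}$ are sound details that the paper leaves implicit.
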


\begin{proof}
With $\widehat{x}_0 = x_0$ and identical inputs, the real-valued twin
preserves the constant-product invariant exactly under fee-free swaps,
so $h_K(\widehat{x}_n) = K_0$ for all $n \geq 0$. The on-chain pool,
evolving under floor-rounded discretized arithmetic
\eqref{eq:v2-discretized}, satisfies the per-step bound \eqref{eq:tg-bound};
telescoping across $n$ swaps gives $|h_K(x_n) - K_0| \leq nB$ via
\cite[Theorem~1]{TG25}. The triangle inequality then yields
$|h_K(\widehat{x}_n) - h_K(x_n)| \leq nB$. \qed
\end{proof}

\begin{remark}[Realistic magnitudes]
\label{rem:magnitudes}
The bound \eqref{eq:fidelity-bound} is tight in the worst case but loose
in practice. For a typical mainnet pool with reserves on the order of
$10^{18}$--$10^{24}$ minimal token units and $n$ on the order of tens of
swaps in a fork-and-evaluate sweep, the relative slack $nB / K_0$ is on
the order of $10^{-10}$ or smaller \cite{TG25}, well below any threshold
relevant to analytical decision-making. The substrate is therefore
\emph{economically exact} on realistic instances, even where it is only
\emph{provably $\varepsilon$-close} in the worst case. The bound is the
worst-case statement under TG's \cite{TG25} formalism; tighter bounds are
available under additional structural assumptions (bounded swap-size
relative to reserves, specific arithmetic-mode choices, or distributional
assumptions on the swap sequence) and represent natural follow-up work
that the substrate's deterministic foundation makes accessible.
\end{remark}

\subsection{The Provider / Builder Factorization}

The State Twin definition (T1-T4) specifies what the substrate must satisfy as an object in $\mathcal{X}$; Remark~\ref{rem:factorization} establishes that such an object must exist accessible independently of the chain-read map $\chi$ for any architecture supporting counterfactual queries. We now construct it. The construction factors into two functions: a \emph{Provider} $\pi: \mathcal{P}_{\text{ext}} \to \mathcal{S}$ from an external identifier space $\mathcal{P}_{\text{ext}}$ (synthetic recipe names, chain addresses, CSV row keys) to a typed canonical \emph{Snapshot} $\mathcal{S}$, and a \emph{Builder} $\beta: \mathcal{S} \to \mathcal{X}$ that lifts the snapshot into the state space as a typed in-memory object satisfying Definition~\ref{def:state-twin}. Their composition $\beta \circ \pi: \mathcal{P}_{\text{ext}} \to \mathcal{X}$ realizes the substrate-access path that Remark~\ref{rem:factorization}'s corollary requires, replacing $\chi$ as the universal state-access map without losing fidelity (Proposition~\ref{prop:fidelity}).

The DeFiPy reference implementation realizes each formal object as a
named software artifact:

\begin{itemize}
\item The external identifier space $\mathcal{P}_{\text{ext}}$ is the
      \texttt{pool\_id} string type, with provider-specific semantics
      (synthetic recipe name for \texttt{MockProvider}, chain
      \texttt{"<protocol>:<address>"} for \texttt{LiveProvider}, or
      any encoding a custom provider chooses).

\item The Provider $\pi$ is the \texttt{StateTwinProvider} abstract base
      class with a single \texttt{snapshot(pool\_id)} method;
      concrete subclasses (\texttt{MockProvider}, \texttt{LiveProvider},
      \texttt{CSVProvider}) know about \emph{sources}.

\item The Snapshot $\mathcal{S}$ is the typed \texttt{PoolSnapshot}
      dataclass family (\texttt{V2PoolSnapshot}, \texttt{V3PoolSnapshot},
      \texttt{BalancerPoolSnapshot}, \texttt{StableswapPoolSnapshot}),
      one per protocol family.

\item The Builder $\beta$ is the \texttt{StateTwinBuilder} dispatch
      that maps a \texttt{PoolSnapshot} to a concrete exchange object
      satisfying Definition~\ref{def:state-twin}; the builder dispatch
      knows about \emph{shapes}.
\end{itemize}

\paragraph{Why this factorization.} The factorization matters because the
Provider and Builder vary along fundamentally different axes. The
Provider varies with the \emph{source} of pool state: a synthetic
recipe, a live RPC read, a fork node, a CSV file, a historical archive.
The Builder varies with the \emph{shape} of pool state: which protocol
family the pool belongs to (V2, V3, Balancer, Stableswap) and how its
invariant is encoded. A naive design that merges the two produces
$|\mathrm{sources}| \times |\mathrm{protocols}|$ adapters; the
factorization reduces this to $|\mathrm{sources}| + |\mathrm{protocols}|$.
New sources extend the Provider axis; new protocols extend the Builder
axis; the Snapshot $\mathcal{S}$ is the canonical hand-off.

\paragraph{The State Twin pipeline.} An end-to-end agent or analyst flow is:
\[
\boxed{\;\;
\texttt{pool\_id} \;\xrightarrow{\;\textsc{Provider}\;}\;
\texttt{Snapshot} \;\xrightarrow{\;\textsc{Builder}\;}\;
\widehat{x}_k \;\xrightarrow{\;\textsc{Primitive}\;}\;
\texttt{Result}
\;\;}
\]
The right two arrows, builder and primitive, are the same regardless
of source. A primitive cannot tell whether its input came from a
synthetic mock or a live mainnet read, by design.

\subsection{Reference Implementation in DeFiPy}

DeFiPy v2 ships the abstraction in the \texttt{defipy.twin} module
\cite{DeFiPy26}. The provider abstract base class is small:

\begin{lstlisting}[caption={The \texttt{StateTwinProvider} abstract base class.}]
class StateTwinProvider(ABC):
    """Abstract source of pool snapshots."""

    @abstractmethod
    def snapshot(self, pool_id: str, **kwargs) -> PoolSnapshot:
        """Return a typed snapshot for the given pool identifier.

        pool_id semantics are provider-specific:
            MockProvider:   recipe name (e.g., "eth_dai_v2")
            LiveProvider:   "<protocol>:<address>"
            CustomProvider: any encoding the consumer chooses.
        """
        ...
\end{lstlisting}

Two concrete providers ship in v2:

\begin{itemize}
\item \texttt{MockProvider}: four canonical synthetic recipes (V2, V3,
      Balancer 50/50, Stableswap with $A{=}10$). Used in notebooks, tests,
      LLM demos, and any setting where ground-truth on-chain state is
      irrelevant or intentionally controlled.

\item \texttt{LiveProvider}: live mainnet reads for Uniswap V2 and V3
      via \texttt{web3.py}. Snapshots resolve \texttt{"latest"} to a
      concrete block number once, then pin every subsequent read to that
      block. V3 reads are batched via Multicall3 \cite{Multicall3} into a
      single round trip. Snapshots are decimal-adjusted to whole-token
      units to match \texttt{MockProvider}'s contract.
\end{itemize}

A custom provider is roughly twenty lines of code; the snapshot type is
the only thing it needs to produce. Listing~\ref{lst:csv-provider} shows
a CSV-backed provider in full.

\begin{lstlisting}[label={lst:csv-provider},caption={A custom State Twin provider that reads pool state from a CSV row. The same builder, primitive, and downstream agent code work unchanged.}]
from defipy.twin import StateTwinProvider, V2PoolSnapshot

class CSVProvider(StateTwinProvider):
    """Load V2 pool state from a per-pool CSV row."""

    def __init__(self, csv_path: str):
        self.rows = self._load(csv_path)

    def snapshot(self, pool_id: str, **kwargs) -> V2PoolSnapshot:
        row = self.rows[pool_id]
        return V2PoolSnapshot(
            pool_id     = pool_id,
            token0_name = row["token0"],
            token1_name = row["token1"],
            reserve0    = float(row["reserve0"]),
            reserve1    = float(row["reserve1"]),
        )
\end{lstlisting}

\subsection{Snapshot Enrichment and Chain Context}

Every snapshot from \texttt{LiveProvider} carries optional chain-context
fields: \texttt{block\_number}, \texttt{timestamp}, and \texttt{chain\_id}.
\texttt{MockProvider} leaves these fields as \texttt{None}: a synthetic
snapshot has no chain context, and inventing values would be a fidelity
violation. Consumers that need chain context (e.g., for cache keys, reorg
detection, multi-chain routing) branch on \texttt{None}; consumers that do
not, ignore it. This optional-enrichment pattern is what allows the same
primitive surface to serve notebook quants, backtests against historical
blocks, and LLM agents calling tools, without per-consumer adapters.

% ============================================================
\section{DeFiPy v2: An Agentic Python Toolkit}
\label{sec:defipy}
% ============================================================

DeFiPy v2 is the open-source reference implementation of the State Twin
pattern \cite{DeFiPy26}. It is published on PyPI and ships under three
install profiles: \texttt{pip install defipy} (core analytics, zero
web3/LLM dependencies), \texttt{defipy[chain]} (adds \texttt{LiveProvider}),
and \texttt{defipy[agentic]} (adds the MCP server SDK for LLM tool
exposure). The architectural commitment is that the math layer stays
deterministic, dependency-light, and LLM-free; LLM integration happens
strictly at the agent layer.

\subsection{The Primitive Interface}

Every analytical primitive in DeFiPy follows a three-line interface:

\begin{lstlisting}[caption={The primitive interface: stateless construction, computation at apply, typed dataclass return.}]
primitive = SomePrimitive()              # Stateless construction
result    = primitive.apply(lp, *args)   # Computation against twin
value     = result.<typed_field>         # Structured dataclass access
\end{lstlisting}

Three invariants hold across all 21 shipped primitives. \emph{Stateless
construction:} the primitive holds no state that affects the math; calling
\texttt{apply} twice with the same inputs returns the same answer.
\emph{Computation at apply:} all work happens in one call; no
subscriptions, no streams, no internal caches. \emph{Typed dataclass
return:} every primitive returns a specific result dataclass (e.g.,
\texttt{PositionAnalysis}, \texttt{PoolHealth}, \texttt{PriceMoveScenario})
with named fields, no raw tuples, no dicts, no stringly-typed
returns.

The interface is what allows the same primitive to be called from a
notebook by a quant, from a test by CI, and from an LLM via MCP, with
zero adapter code. Listing~\ref{lst:notebook-call} shows the canonical
notebook flow.

\begin{lstlisting}[label={lst:notebook-call},caption={Calling a primitive against a State Twin from a Jupyter notebook.}]
from defipy import AnalyzePosition
from defipy.twin import MockProvider, StateTwinBuilder

# 1. Source: pick a synthetic recipe
provider = MockProvider()
snapshot = provider.snapshot("eth_dai_v2")

# 2. Shape: build the twin
lp = StateTwinBuilder().build(snapshot)

# 3. Compute: run any primitive
result = AnalyzePosition().apply(
    lp,
    lp_init_amt = 1.0,
    entry_x_amt = 1000.0,
    entry_y_amt = 100000.0,
)

print(f"Diagnosis: {result.diagnosis}")
print(f"Net PnL:   {result.net_pnl:.4f}")
print(f"IL %:      {result.il_percentage:.4f}")
\end{lstlisting}

The same three lines work against a live RPC by replacing one import:

\begin{lstlisting}[label={lst:live-rpc-call},caption={The live-RPC analog of Listing~\ref{lst:notebook-call}: the primitive call is unchanged; only the provider import differs.}]
from defipy.twin import LiveProvider, StateTwinBuilder

provider = LiveProvider("https://eth-mainnet.example.com/v2/<key>")
snapshot = provider.snapshot(
    "uniswap_v2:0xB4e16d0168e52d35CaCD2c6185b44281Ec28C9Dc"
)
lp = StateTwinBuilder().build(snapshot)  # now WETH/USDC at chain head
\end{lstlisting}

\subsection{The 21 Primitives}

The shipped primitives span nine categories and four protocol families. Table~\ref{tab:primitives} summarizes the surface; all are typed-dataclass returning, all consume a State Twin via the same \texttt{lp} argument (the twin instance produced by the Builder, as in Listing~\ref{lst:notebook-call}), and all are exhaustively unit-tested.

\begin{table}[h]
\centering
\footnotesize
\begin{tabular}{lll}
\toprule
Category & Example primitive & Returns \\
\midrule
Position analysis & \texttt{AnalyzePosition} & \texttt{PositionAnalysis} \\
Price scenarios   & \texttt{SimulatePriceMove} & \texttt{PriceMoveScenario} \\
Pool health       & \texttt{CheckPoolHealth} & \texttt{PoolHealth} \\
Risk              & \texttt{DetectRugSignals}, \texttt{AssessDepegRisk} & \texttt{RugSignalReport}, \dots \\
Optimization      & \texttt{OptimalDepositSplit}, \texttt{EvaluateRebalance} & \texttt{DepositSplit}, \dots \\
Comparison        & \texttt{CompareFeeTiers}, \texttt{CompareProtocols} & \texttt{FeeTierComparison}, \dots \\
Execution         & \texttt{CalculateSlippage}, \texttt{DetectMEV} & \texttt{SlippageAnalysis}, \dots \\
Portfolio         & \texttt{AggregatePortfolio} & \texttt{PortfolioAggregate} \\
Break-even        & \texttt{FindBreakEvenPrice}, \texttt{FindBreakEvenTime} & \texttt{BreakEvenPoint} \\
\bottomrule
\end{tabular}
\caption{The nine analytical-primitive categories shipped in DeFiPy v2,
one representative primitive per category and the typed dataclass it
returns. All 21 primitives consume a State Twin via the same \texttt{lp}
argument and follow the interface in Listing~3.}
\label{tab:primitives}
\end{table}

\subsection{LLM Exposure via the Model Context Protocol}

A curated subset of 10 primitives is exposed as Model Context Protocol
(MCP) \cite{MCP24} tools through a reference server shipped in the
\texttt{defipy[agentic]} install. The curation principle is leaf
primitives only; composition is left LLM-side, on the empirical hypothesis
that frontier models compose tool calls more flexibly than any pre-baked
orchestration layer would. We verified this hypothesis live before the v2
release: a Claude Desktop session asked \emph{``check the health of the
eth\_dai\_v2 pool''}, the MCP server dispatched \texttt{CheckPoolHealth}
against a \texttt{MockProvider} twin, returned the typed result, and the
LLM correctly inferred the pool's state and proactively suggested
\texttt{DetectRugSignals} as a follow-up, without any explicit
composition primitive.

The MCP server itself is small (Listing~\ref{lst:mcp-server-sketch}); the
heavy lifting is in the primitive interface and the twin abstraction, both
of which exist independently of MCP. This is the substrate-not-product
principle: the agentic layer is a thin adapter, and the layer underneath
remains stable across whatever tool-calling protocol comes next.

\begin{lstlisting}[label={lst:mcp-server-sketch},caption={Sketch of the MCP server's tool dispatch. See \texttt{python/mcp/defipy\_mcp\_server.py} in the DeFiPy repository.}]
from defipy.tools import get_tool_schema, dispatch
from defipy.twin   import MockProvider, StateTwinBuilder

@server.list_tools()
async def list_tools():
    return [get_tool_schema(name) for name in CURATED_TOOLS]

@server.call_tool()
async def call_tool(name: str, args: dict):
    pool_id  = args.pop("pool_id")
    snapshot = MockProvider().snapshot(pool_id)
    lp       = StateTwinBuilder().build(snapshot)
    result   = dispatch(name, lp=lp, **args)   # primitive call
    return result.to_dict()                    # typed dataclass -> JSON
\end{lstlisting}

\subsection{Three Audiences, One Surface}

The most consequential property of the architecture is that one primitive
serves three audiences without modification:

\begin{itemize}
\item A \textbf{notebook quant} calls
      \texttt{AnalyzePosition().apply(lp, \dots)} against a synthetic
      twin and reads \texttt{result.il\_percentage}.

\item A \textbf{backtest} calls the same primitive against a
      \texttt{LiveProvider} twin pinned to a historical block, iterating
      the same call across thousands of blocks.

\item An \textbf{LLM agent} calls the same primitive via the MCP server,
      receives the same typed result serialized to JSON, and reasons over
      the named fields.
\end{itemize}

There is no per-audience adapter, no agent-specific path, no
notebook-specific shortcut. The primitive interface is the foundation; the
State Twin is the substrate; everything else composes on top.

% ============================================================
\section{Fork-and-Evaluate: Multi-Scenario Reasoning}
\label{sec:fork-evaluate}
% ============================================================

The State Twin's architectural payoff lands in the fork-and-evaluate pattern: pull live state once, fork the in-memory twin $N$ ways under distinct scenarios, run primitives against each fork, aggregate into a recommendation. This is the workflow the reactive architecture cannot support at all. By Remark~\ref{rem:factorization}, every state-dependent query in a reactive architecture factors through the chain-read map $\chi$. Hypothetical inputs $u'$ do not correspond to any value of $\chi(p)$ for any pool identifier $p$, so fork-and-evaluate cannot factor through $\chi$ and therefore cannot be evaluated reactively at any speed.

Concretely, forking on-chain state under a hypothetical scenario would require either $N$ real transactions (impossible for hypothetical scenarios) or $N$ RPC reads at $N$ hypothetical blocks (which do not exist). The substrate-mediated workflow described below is the realization of Remark~\ref{rem:factorization}'s corollary on the State Twin: each fork is an instance of the $\mathcal{X}$-valued object whose existence the corollary requires.

\subsection{The Pattern}

Listing~\ref{lst:fork-evaluate} is the complete substrate side of the
pattern. The chain read happens once at the top; every subsequent
scenario evaluation is local computation against an independent fork.

\begin{lstlisting}[label={lst:fork-evaluate},caption={Fork-and-evaluate over a live V3 pool. One chain read, $N$ in-memory twin forks, sub-second wall-clock for the entire scenario sweep.}]
import copy
from defipy.twin           import LiveProvider, StateTwinBuilder
from defipy.primitives.position import SimulatePriceMove

# 1. ONE chain read seeds the substrate
provider = LiveProvider(rpc_url)
snap = provider.snapshot(
    "uniswap_v3:0x88e6A0c2dDD26FEEb64F039a2c41296FcB3f5640"
)
lp = StateTwinBuilder().build(snap)   # USDC/WETH 5bps, pinned block

# 2. N independent in-memory forks under distinct price scenarios
scenarios = [-0.30, -0.20, -0.10, 0.0, +0.10, +0.20, +0.30]
results   = []
for pct in scenarios:
    fork = copy.deepcopy(lp)          # T4: independent twin per scenario
    res  = SimulatePriceMove().apply(
        fork,
        price_change_pct = pct,
        position_size_lp = 1.0,
        lwr_tick = snap.lwr_tick,
        upr_tick = snap.upr_tick,
    )
    results.append(res)
\end{lstlisting}

The semantics correspond directly to property T4 of
Definition~\ref{def:state-twin}: each \texttt{deepcopy} produces a twin
that evolves independently under its scenario without back-propagating to
the original. The state-space view in Section~\ref{sec:state-space} makes
this rigorous: each fork is a distinct trajectory $x_k \to x_{k+1}^{(s)}
= f(x_k, u^{(s)}, w^{(s)})$ under scenario $s$, and observation
$y^{(s)} = h(x_{k+1}^{(s)})$ is a structured dataclass the caller
can aggregate however they wish. By Proposition~\ref{prop:fidelity}, each
fork's trajectory remains within $\varepsilon$-fidelity of the
counterfactual on-chain trajectory it represents, with $\varepsilon$
controlled by the bound \eqref{eq:fidelity-bound} for the constant-product
family.

\subsection{Empirical Cost}

In the v2.1 reference demo \cite{DeFiPy26-Demo} against the USDC/WETH 5bps
V3 mainnet pool, $N{=}50$ scenarios complete in well under one second of
wall-clock time after the initial chain read. The dominant cost is the
single RPC round trip; every additional fork is essentially free,
modulo the cost of \texttt{deepcopy} on a small typed object. This is the
performance signature that distinguishes substrate-mediated reasoning
from RPC-mediated reasoning: $N$ scenarios at the cost of $1$ chain
read, rather than $N$ chain reads.

\subsection{What This Unlocks}

Fork-and-evaluate is the smallest non-trivial substrate-native workflow,
and it generalizes:

\begin{itemize}
\item \textbf{Distributional risk:} replace the deterministic price grid
      with samples from an empirical or model-implied distribution; the
      output is a true distribution over $y$, not a point estimate. The
      stochastic generalization in which $u_k$ and $w_k$ are promoted to
      random variables, and the substrate hosts Monte Carlo, filtering,
      or Bayesian primitives over the deterministic twin, composes
      naturally with the stochastic state-space treatments of AMMs in
      \cite{NKV24}, and is the subject of forthcoming work.

\item \textbf{Multi-agent ensemble reasoning:} spawn $N$ independent
      agents on $N$ forks, each with its own scenario-generation policy,
      objective function, or analytical specialization, and aggregate
      recommendations by ensemble methods (voting, stacking, Bayesian
      averaging). Property T4 is exactly the structural independence
      condition that ensemble theory requires; reactive architectures
      cannot host this workflow at all, because they offer one trajectory
      per epoch. We treat the agent-ensemble line of work as a separate
      forthcoming paper building on the substrate.

\item \textbf{Sequential reasoning:} chain forks across blocks, with
      $\widehat{x}_{k+1}$ from $f(\widehat{x}_k, u_k, w_k)$,
      to evaluate multi-step strategies without on-chain settlement.

\item \textbf{LLM-driven scenario design:} let the agent propose its own
      scenarios as tool inputs; the substrate evaluates them. The agent
      is no longer constrained to scenarios the developer pre-coded.

\item \textbf{Backtesting:} pin to a historical block, evaluate the same
      primitive across the historical trajectory, and the substrate
      becomes a backtest harness without any new infrastructure.
\end{itemize}

In each case, the State Twin is doing the same job: presenting an
in-memory, typed, replayable replica of pool state that the reasoning
layer (human or LLM) can fork freely against.

% ============================================================
\section{Discussion}
\label{sec:discussion}
% ============================================================

\subsection{Substrate, Not Product}
\label{sec:disc-substrate}

DeFiPy v2 deliberately does not ship an agent. It ships substrate: typed
primitives, the State Twin abstraction, an MCP adapter. The opinionated
choice is that the durable layer in agentic DeFi is the substrate, and
that wrapping the substrate in a particular agent framework today would
couple longevity to whichever framework happens to dominate next year.
The substrate-and-product separation echoes \texttt{ethers.js} and
\texttt{web3.py}, libraries whose value is precisely that they do not
opine on what the consumer is building.

\subsection{From Point-Estimate to Ensemble Reasoning}
\label{sec:disc-ensemble}

The substrate enables a structural transition in how DeFi protocols are
reasoned about. Human-mediated DeFi analysis is inherently a
point-estimate workflow: an analyst examines current chain state,
considers a small number of scenarios implicitly, and arrives at a
single judgment, with the cardinality of scenarios bounded by human
cognitive capacity. Machine-mediated DeFi analysis over the State Twin
substrate is inherently an ensemble workflow: the agent forks the
substrate $N$ ways, evaluates $N$ scenarios in parallel, and reasons
over the resulting distribution of outcomes, with the cardinality
bounded only by compute. This transition is not a change of speed; it
is the structural shift from point-estimate decision-making to ensemble
reasoning in the sense familiar from Bayesian model averaging, bootstrap
aggregation, and Monte Carlo simulation. The state-space framework
adopted in Section~\ref{sec:state-space} is what makes this transition
formalizable: trajectories are first-class objects, ensembles of
trajectories compose naturally, and per-trajectory observations
aggregate into distributions over outcomes.

The operational consequence is a corresponding shift in how DeFi capital
allocation decisions are made. An LP evaluating position risk under
price movement is fundamentally asking a distributional question, but
the reactive architecture forces the question into a point-estimate
workflow, producing decisions grounded in single scenarios rather than
in the distribution the LP actually faces. The substrate corrects this
mismatch by making distributional analysis the natural workflow: an LP
can compute value-at-risk, expected shortfall, or probability-of-loss
across thousands of price trajectories in sub-second wall-clock time
after a single chain read. Protocol comparison becomes apples-to-apples
because the same scenario can be evaluated against V2, V3, Balancer, and
Stableswap twins under identical inputs. Backtesting becomes
structurally trustworthy because historical replay and counterfactual
simulation are unified under the same provider/builder factorization.
Multi-pool joint optimization becomes accessible because forking and
evolving distinct pool states under shared price inputs is what the
substrate is structurally designed to do.

Beyond the change in decision quality, the substrate changes the
decision's accountability structure. An ensemble-mediated decision over
named primitives producing typed dataclasses is reconstructable in a way
that intuition-mediated decisions over point estimates are not: the
scenario set, the primitive invocations, and the typed outputs together
form an auditable trace of the reasoning that produced the decision.  This is the property that makes the substrate composable with the provenance layer and the trust-gate role discussed in
Section~\ref{sec:related}: the substrate produces ensemble-based reasoning; the provenance layer anchors the resulting decisions; a trust gate, when formalized, would enforce abstention when the ensemble's confidence is insufficient. Each concern is distinct, and together they form the architecture for agentic DeFi reasoning that is faster, distributionally honest, and auditable in a way reactive architectures cannot be. The substrate itself remains deterministic. The fidelity bound of Proposition~\ref{prop:fidelity} is worst-case-deterministic. But the workflow it enables is fundamentally ensemble-based, because the agent
operates on distributions over trajectories rather than on single trajectories. The forthcoming stochastic extension formalizes the ensemble explicitly by promoting $u_k$ and $w_k$ to random variables; the present paper establishes the deterministic substrate on which that
formalization composes.

\subsection{Read-Only by Design}
\label{sec:disc-readonly}

\texttt{LiveProvider} is read-only. There is no \texttt{provider.sign()},
no \texttt{provider.send\_transaction()}, no embedded key management.
Signing infrastructure varies enormously across consumers (local key,
hardware wallet, MPC vault, signing service, custodial flow), and the
substrate is not the right place to opine on any of them. The escape
hatch \texttt{provider.get\_w3()} returns the underlying \texttt{web3.Web3}
instance for consumers who need to act after analysis; from there, their
signing infrastructure takes over. The substrate stops at the chain-read
boundary on purpose.

\subsection{Math Stays LLM-Free}
\label{sec:disc-llm-free}

The math layer (the 21 analytical primitives, the State Twin builder, the
protocol invariants) has zero LLM dependencies and zero web3
dependencies in the core install. This is what allows the math to be
auditable, reproducible, and trustworthy at the substrate level. The
LLM integration lives strictly at the MCP adapter layer; remove it and
the rest of the library is unaffected. We regard this discipline as
non-negotiable: the moment the math layer takes an LLM dependency, the
whole stack becomes only as deterministic as the LLM, which is to say,
not deterministic at all.

\subsection{Generalization Beyond DeFi}
\label{sec:disc-generalization}

The State Twin pattern (typed off-chain replica, provider/builder
split, primitive interface, agentic adapter) is not specific to DeFi.
Any domain in which (a) the underlying system has a well-defined
state-transition map, (b) the system's canonical state is expensive to
query, and (c) reasoning over counterfactuals is high-value, is a
candidate for the same architecture. Manufacturing process control,
clinical decision support, supply-chain logistics, and grid operations
all fit this template. We expect the substrate-not-product framing to
travel; the demonstration that it lands cleanly in DeFi is the easy case,
not the hard one.

\subsection{Limitations and Future Work}
\label{sec:disc-limitations}

Several limitations frame the current substrate. \texttt{LiveProvider} in
v2.1 ships V2 and V3 only; Balancer and Stableswap live readers are v2.2
work. V3 \emph{active-liquidity} reads are full-fidelity; V3
\emph{cross-tick} primitives that require walking the tick bitmap are
also v2.2. The substrate is currently single-pool; multi-pool routing
substrate (treating a graph of pools as a composite twin) is a natural
extension but not yet implemented. The fidelity bound in
Proposition~\ref{prop:fidelity} is currently stated for the
constant-product family; analogous explicit bounds for Balancer's
weighted-geometric-mean and Stableswap's amplified invariant under their
respective fixed-point implementations are a clean open problem.
Finally, we have not formalized the trust gate that should sit between
an LLM proposing a transition $u_k$ and the substrate evaluating it;
this is an open problem and the next major piece of the agentic-DeFi
safety story.

\subsection{Why Reactive Is Not Just Slower}
\label{sec:disc-reactive-structural}

A natural objection to the substrate framing is that reactive
architectures are merely slower, not structurally limited, and that
sufficient engineering (better RPC infrastructure, parallel chain reads,
optimized caching, fork-node access) closes the gap.
Remark~\ref{rem:factorization} makes explicit why this objection fails:
the gap is not a performance gap, it is a categorical one. A reactive
architecture's state-dependent queries are forced to factor through the
chain-read map $\chi$, which means counterfactual transitions, ensemble
queries, and trajectory rollouts are not supported \emph{at any speed},
because the queries themselves do not correspond to values of $\chi$
for any pool identifier. Faster RPC infrastructure makes the supported
queries faster; it does not enlarge the set of supported queries. The
substrate's contribution is not to accelerate reactive workflows but to
unlock a strictly larger query class that reactive architectures cannot
host by construction.

The practical implication is that the substrate-versus-reactive question
is settled at the architectural level rather than the engineering level:
the choice is not between two implementations of the same capabilities,
but between two query classes of different cardinality. Once an agentic
workflow requires counterfactual reasoning (e.g., \emph{what is my
position value under a $-30\%$ price shock?}, \emph{which of five
candidate rebalances is optimal?}, \emph{what is the distribution of
outcomes across a thousand price trajectories?}), it has crossed a
structural boundary that no optimization of the reactive path can
recover. Anchored-state alternatives (forked nodes, archive RPCs, cached
snapshots) are special cases of the substrate when they exist as typed
in-memory objects admitting direct $h$ and $f$ evaluation; they are
\emph{not} faster reactive paths but partial realizations of the
substrate framing under a different name.

% ============================================================
\section{Related Work}
\label{sec:related}
% ============================================================

\paragraph{Digital twins.} The digital-twin concept originated in
manufacturing and aerospace contexts \cite{Gri17}, where a real-time
in-memory replica of a physical asset enables monitoring and predictive
maintenance. The State Twin imports the core idea, a high-fidelity
in-memory replica that supports counterfactual analysis, into the
DeFi setting, with the additional property that the replica's fidelity
to its source is exact in the rational-arithmetic case and tightly
bounded under fixed-point arithmetic
(Proposition~\ref{prop:fidelity}, \cite{TG25}), owing to the determinism
of the underlying state-transition map.

\paragraph{Automated market makers.} The mathematical structure of AMMs
is well-studied. The constant-product invariant of Uniswap V2
\cite{Adams20}, the concentrated-liquidity construction of V3
\cite{Adams21}, the weighted-pool invariant of Balancer \cite{Mar19}, and
the amplified invariant of Curve \cite{Ego19} together form the core
formal foundation. The state-space framing
(Section~\ref{sec:state-space}) is a re-presentation of these protocols
as discrete-time controlled dynamical systems; this framing follows
straightforwardly from the protocol invariants and is used here as
substrate for the State Twin abstraction rather than as a contribution
in its own right.

\paragraph{Formal-methods treatments of AMMs.} Tranquilli and Gupta
\cite{TG25} formalize Uniswap v3 as networks of priced timed automata
and finite-state transducers, prove an explicit additive rounding bound
for the discretized constant-product update, and verify the bound
exhaustively on small TLA+ instances via TLC. Their lemma underwrites
Proposition~\ref{prop:fidelity} above, which is the quantitative form of
property T2 in Definition~\ref{def:state-twin}. The directions are
complementary: their work targets exhaustive correctness verification
of pool dynamics, ours targets agentic reasoning over an off-chain typed
substrate that inherits their bound. Earlier theorem-proving work on
constant-product AMMs in Lean~4 \cite{PB24} addresses correctness in a
similar formal-methods spirit but for the v2-style full-range
construction.

\paragraph{Stochastic state-space approaches to AMMs.} Nadkarni,
Kulkarni, and Viswanath \cite{NKV24} formalize AMM-mediated price
discovery as a hidden-state Markov model with the external price as the
latent variable and trader actions as noisy observations, and derive
Kalman-filter-based optimal adaptive bonding curves under Gaussian and
lognormal price processes. Their work uses the stochastic state-space
view in service of mechanism design (redesigning the AMM curve itself
to minimize liquidity-provider loss), whereas the present paper uses a
closely related deterministic state-space view in service of substrate
construction. The stochastic generalization of the State Twin substrate,
in which $u_k$ and $w_k$ in \eqref{eq:state-space} are promoted to
random variables and the fork-and-evaluate pattern becomes Monte Carlo,
filtering, or Bayesian inference over the deterministic twin, is the
natural composition with \cite{NKV24} and is the subject of forthcoming
work. We deliberately keep the substrate itself deterministic in both
the present paper and that follow-up: what changes in the stochastic
extension is the inputs and disturbances the substrate consumes and the
primitives that run against it, not the substrate's underlying
state-transition map.

\paragraph{Agentic DeFi systems.} A growing body of work explores LLM-
and agent-driven interaction with DeFi protocols, primarily through
the lens of natural-language interfaces over chain state. The Model Context Protocol \cite{MCP24}, on which the agentic exposure of DeFiPy v2 is built, is a recent industry standard for LLM-tool integration. The present paper identifies the reactive
architecture as a structural problem and proposes an off-chain typed
substrate as the fix; prior work on agentic DeFi has focused
primarily on natural-language interfaces over chain state rather than
on the substrate layer underneath.

\paragraph{Backtesting and simulation frameworks.} Frameworks for
backtesting trading strategies on AMMs exist as research codebases and
proprietary systems. These typically conflate \emph{historical replay}
with \emph{counterfactual simulation} and do not separate state source
from state shape; the State Twin's provider/builder split treats both
uniformly, with synthetic, live, and historical sources interchangeable
behind a single primitive surface.

\paragraph{Provenance for agent actions.} Earlier work on
AnchorRegistry \cite{Moo26-AR} formalizes a cryptographic-priority
mechanism for provenance attribution under operator gating, complementary
to the State Twin's substrate-mediated reasoning. A formal trust gate
sitting between an LLM and any state-modifying action remains an open
problem: the substrate provides safe \emph{reasoning} and the provenance
layer provides \emph{auditable history}, but the action gate that
enforces abstention under insufficient confidence is not yet formalized.
Formalizing the action gate and composing it with the substrate and provenance layers is the next natural piece of work.

% ============================================================
\section{Conclusion}
\label{sec:conclusion}
% ============================================================

We have argued that the dominant reactive architecture for agentic DeFi
systems is structurally limited, and that the missing layer is an
off-chain typed substrate that preserves the protocol's exact mathematics
while admitting the operations on-chain state cannot: forking, replay,
branching, counterfactual rollout. We named this substrate the
\emph{State Twin}, formalized it through a state-space view of AMM
protocols, and gave a clean provider/builder factorization that
decouples state source from state shape. We proved a quantitative
\emph{State Twin Fidelity} bound on the divergence between twin and
on-chain trajectories under the discretized constant-product update, by
specializing the rounding lemma of Tranquilli and Gupta \cite{TG25} to
the substrate setting. We described its open-source realization in
DeFiPy v2 and demonstrated the substrate empirically through a
fork-and-evaluate workflow that evaluates fifty distinct price-shock
scenarios in sub-second wall-clock time after a single chain read.

The contribution is the substrate, not a particular agent. The State
Twin pattern is a precise specification of
\emph{what an agentic DeFi substrate must look like}: typed,
fidelity-bounded, forkable, source-agnostic, primitive-clean,
LLM-native at the adapter layer. We expect the pattern to generalize
beyond DeFi to any domain where reasoning over counterfactuals against
a deterministic state-transition map has economic or operational value.

% ============================================================
\section*{Acknowledgments}
% ============================================================

The author thanks the open-source contributors to UniswapPy, BalancerPy,
StableswapPy, and Web3Scout, whose protocol-specific implementations
realize the hand-derived AMM math of \cite{Moo25-Book} and underpin the
State Twin builder dispatch. AI writing assistance
(Anthropic Claude) was used in drafting this manuscript; all theoretical
contributions, formal definitions, system-design decisions, and
architectural choices are the author's own. The reference implementation
described in this paper is available at
\url{https://github.com/defipy-devs/defipy} and on PyPI as
\texttt{defipy}, with full documentation at \url{https://defipy.org}.

% ------------------------------------------------------------
% Provenance anchor for this manuscript
% ------------------------------------------------------------
\vspace{1em}
\noindent\textbf{Provenance anchor.} A SHA-256 manifest of this
manuscript is anchored at \url{https://anchorregistry.com} for priority
attestation.

% ============================================================

\end{document}